\newcommand{\rul}{\rule{0pt}{20pt}}
\newtheorem{Lm}{Lemma}
\newtheorem{remark}{Remark}
\newtheorem{exmp}{Example}
\newcommand{\tr}[1]{\mathop{{\rm \bf tr}\left[#1\right]}\nolimits}
\newcommand{\E}[1]{\mathop{{\rm \bf E}\left\{#1\right\}}\nolimits}
\begin{document}
\title{State Sensitivity Evaluation within UD based Array Covariance Filters}

\author{J.V.~Tsyganova and M.V.~Kulikova
\thanks{Manuscript received ??; revised ??. The work of the first author is supported by the Ministry of Education and Science of the Russian Federation (grant number 1.919.2011). The second author thanks the support of Portuguese National Fund ({\it Funda\c{c}\~{a}o para a
Ci\^{e}ncia e a Tecnologia})  within the scope of projects PEst-OE/MAT/UI0822/2011 and SFRH/BPD/64397/2009.}
\thanks{The first author is with Ulyanovsk State University,  Str. L. Tolstoy 42, 432017 Ulyanovsk, Russian Federation. The second author is with   Technical University of Lisbon, Instituto Superior T\'{e}cnico,
           CEMAT (Centro de Matem\'{a}tica e Aplica\c{c}\~{o}es),
          Av. Rovisco Pais 1,  1049-001 Lisboa, Portugal; Emails:
TsyganovaJV@gmail.com; maria.kulikova@ist.utl.pt}}

\markboth{PREPRINT}{}

\maketitle

\begin{abstract}
This technical note addresses the UD factorization based Kalman filtering (KF) algorithms. Using
this important class of numerically stable KF schemes, we extend its functionality and develop an elegant and simple method for computation of sensitivities  of the system state to unknown parameters required in a variety of applications. For instance, it can be used for efficient
calculations in sensitivity analysis and in gradient-search optimization algorithms for the maximum likelihood estimation. The new theory presented in this technical note is a solution to the problem formulated by Bierman {\it et al.} in~\cite{Bierman1990}, which has been open since 1990s. As in the cited paper, our method avoids the standard approach based on the conventional KF (and its derivatives with respect to unknown system parameters) with its inherent numerical instabilities and, hence, improves the robustness of computations against roundoff errors.
\end{abstract}

\begin{keywords}
Array algorithms, Kalman filter,  filter sensitivity equations, UD factorization.
\end{keywords}

\IEEEpeerreviewmaketitle

\section{Introduction}

Linear discrete-time stochastic state-space models, with associated Kalman filter (KF),  have been extensively used in practice. Application of the KF  assumes a complete {\it a priori} knowledge of the state-space model parameters, which is a rare case.  As mentioned in~\cite{Sarkka2009}, the classical way of solving the problem of {\it uncertain} parameters is to use {\it adaptive filters} where the model parameters are estimated together with the dynamic state. This requires determination of  sensitivities
of the system state to unknown parameters. Other applications with similar requirements
arise, for instance, in the field of optimal input design~\cite{Mehra1974,Gupta1974} {\it etc}. %$\ldots$

Straight forward differentiation of the KF equations is a direct
approach to  compute the state sensitivities to unknown parameters.
This leads to a set of vector equations, known as the {\it filter sensitivity
equations} and a set of matrix equations,
known as the {\it Riccati-type sensitivity equations}. The main disadvantage of the standard approach is the
problem of numerical instability of the conventional KF (see, for instance, discussion in~\cite{VerhaegenDooren1986}).
The alternative approach can be found in, so-called, square-root (SR)
 algorithms developed to deal with the problem of the numerical instability. Among all existing implementation methods the array SR filters are currently the most preferable for practical implementations. Such methods have the
property of better conditioning and reduced dynamical range. They also imply utilization of numerically stable orthogonal
transformations for each recursion step that leads to more robust
computations (see \cite[Chapter~12]{KailathSayed2000} for an
extended explanation). Since the appearance of the KF in 1960s a number of array SR filters have been developed for the recursive KF~\cite{GrewalAndrews2001,KaminskiBryson1971,Morf1974,Morf1975,ParkKailath1995,Sayed1994} and smoothing formulas~\cite{ParkKailath1995s:1,ParkKailath1995s:2,ParkKailath1996,Gibbs2011}. Recently, some array SR methods have been designed for the $H^{\infty}$ estimation in~\cite{Hassibi2000}.
Here, we deal with the array UD factorization based filters; see~\cite{Bierman1977}. They represent an important class of numerically stable KF implementation methods.

For the problem of the sensitivities computation, the following question arises: would it be possible to update the {\it sensitivity equations} in terms of the variables that appear naturally in the mentioned numerically favored KF algorithms? The first attempt to answer this question belongs to Bierman {\it et al.} In~\cite{Bierman1990} the authors  proposed an elegant method that naturally extends the square-root
information filter (SRIF), developed by Dyer and McReynolds~\cite{Dyer1969}, on the case of the log likelihood gradient (Log~LG) evaluation. Later this approach was generalized to the class of covariance-type filters in~\cite{Kulikova2009IEEE}. However, the problem of utilizing the UD based KF algorithms to generate the required quantities has been open since 1990s. In this technical note we propose and justify an elegant and simple solution to this problem. More precisely, we present a new theory that equips any array UD based KF algorithm with a means for simultaneous computation of derivatives of the filter variables with
respect to unknown system parameters. As in~\cite{Bierman1990}, this avoids implementation of
the conventional KF (and its direct differentiation with respect to
unknown system parameters) because of its inherent numerical
instability and, hence, we improve the robustness of computations against
roundoff errors.  The new results can be used, e.g., for efficient
calculations in sensitivity analysis and in gradient-search optimization algorithms for the maximum likelihood estimation of
unknown system parameters.

\section{Problem statement and the Conventional Kalman Filter}
\PARstart{C}{onsider}  the discrete-time linear stochastic system
  \begin{eqnarray}
   x_{k} &=& F x_{k-1}+G w_k,   \quad k \ge 0, \label{eq2.1} \\
    z_k &=& H x_k+v_k  \label{eq2.2}
  \end{eqnarray}
where $x_k \in \mathbb R^n$ and $z_k \in \mathbb R^m$ are,
respectively, the state and the measurement vectors; $k$ is a
discrete time, i.e. $x_k$ means $x(t_k)$.
 The  process noise, $\{w_k\}$, and  the measurement noise, $\{v_k\}$, are Gaussian white-noise processes, with covariance
matrices $Q \ge 0$ and $R > 0$, respectively. All random variables
have known mean values, which we can take without loss of generality
to be zero. The noises  $w_k \in
\mathbb R^q$, $v_k \in \mathbb R^m$ and the initial state  $x_0 \sim {\mathcal N}(0,\Pi_0)$ are taken from mutually independent
Gaussian distributions.

The associated KF yields  the
linear least-square estimate, $\hat x_{k|k-1}$, of the state vector $x_k$ given the measurements
$Z_1^{k-1}=\{z_1,\ldots, z_{k-1}\}$ that can be computed as follows~\cite{KailathSayed2000}:
$$
\hat x_{k+1|k}  =   F \hat x_{k|k-1}+K_{p,k}e_k, \quad e_k =z_k-H\hat x_{k|k-1}
%\quad \hat x_{0|-1} = 0
$$
where  $\hat x_{0|-1} = 0$, $e_k \sim {\cal N}\left(0,R_{e,k}\right)$ are innovations of the KF, $K_{p,k}=\E{\hat x_{k+1|k} e_k^T}$ and ${K_{p,k}=K_kR_{e,k}^{-1}}$, ${K_k=FP_{k|k-1}H^T}$, ${R_{e,k}=R+HP_{k|k-1}H^T}$.
The error covariance matrix
$$P_{k|k-1}=\E{ (x_{k}-\hat
x_{k|k-1})(x_{k}-\hat x_{k|k-1})^{T}}$$
 satisfies the difference
Riccati equation
$$
P_{k+1|k}  = FP_{k|k-1}F^T+GQG^T - K_{p,k}R_{e,k}K_{p,k}^T,
\; P_{0|-1} = \Pi_0 > 0.
$$
In practice, the matrices characterizing the dynamic model are often known up
 to certain parameters. Hence, we move to a more complicated problem. Assume that
system~(\ref{eq2.1})--(\ref{eq2.2}) is parameterized by a vector of
unknown system parameters ${\theta \in \mathbb R^p}$ that needs to be estimated. This means that the entries of the
matrices ${F \in \mathbb R^{n\times n}}$, ${G \in \mathbb R^{n\times
q}}$, ${H \in \mathbb R^{m\times n}}$, ${Q \in \mathbb R^{q\times q}}$,
${R \in \mathbb R^{m\times m}}$ and
${\Pi_0 \in \mathbb R^{n\times n}}$ are functions of $\theta$. For the sake of simplicity we will suppress
the corresponding notations below, i.e instead of $F(\theta)$,
$G(\theta)$, $H(\theta)$ {\it etc.} we will write $F$, $G$, $H$ {\it
etc.}

Solving the parameter estimation problem by the method of maximum
likelihood requires maximization of the likelihood function (LF)
with respect to unknown system parameters. It is often done by using
the gradient approach\footnote{More about computational aspects of maximum likelihood
estimation, different gradient-based nonlinear programming methods
and their applicability to maximum likelihood estimation could be
found in~\cite{Gupta1974}.}  where the computation of the likelihood
gradient (LG) is necessary. For the state-space
system~(\ref{eq2.1})-(\ref{eq2.2}), the LF and LG evaluation demands an implementation
of the KF and, so-called, ``differentiated'' KF to determine the
sensitivities of the system state  to the unknown system parameters,
as explained, for instance,
in~\cite{Gupta1974,
%Segal1988,
Bierman1990,Kulikova2009IEEE}. More precisely, the LG computation leads to a set of $p$ {\it filter sensitivity
equations} for computing $\partial{\hat
x_{k|k-1}}/\partial{\theta}$ and a set of $p$ matrix {\it Riccati-type sensitivity equations} for computing
$\partial{P_{k|k-1}}/\partial{\theta}$.  Our goal is to avoid the direct
differentiation of the conventional KF equations because of their
inherent numerical instability. Alternatively, we are going to apply a numerically
favored array UD based filter. However, for the task of sensitivities
computation, we have to augment this numerical scheme with a
procedure for robust evaluation of the derivatives of the UD filter variables with
respect to unknown system parameters.

\section{UD Based Array Covariance Filter \label{UD-filter}}

{\it Notations to be used:}  Let $D$ denotes a diagonal matrix; $U$ and $L$ are, respectively,
unit upper  and lower triangular matrices;
$\bar U$ and $\bar L$ are, respectively, strictly upper and lower triangular matrices.  We use Cholesky decomposition of the form ${A=A^{T/2} A^{1/2}}$, where $A^{1/2}$ is an upper triangular matrix. For convenience we will
write ${A^{-1/2}=(A^{1/2})^{-1}}$, ${A^{-T/2}=(A^{-1/2})^T}$ and
$\partial A/ \partial \theta_i$ implies the partial derivative of the
matrix $A$ with respect to the $i$th component of $\theta$ (we assume that the entries of $A$ are differentiable functions of a parameter $\theta$).
 Besides, we use modified Cholesky decomposition of the form ${A=U_{A}D_{A}U^T_{A}}$.

The first UD based filter was developed by Bierman~\cite{Bierman1977}.
Later,  Jover and Kailath~\cite{JoverKailathSayed1986} have presented the advantageous array form for the UD filters.
In this technical note we use the  UD based array covariance filter (UD-aCF) from~\cite[p.~261]{GrewalAndrews2001}.
 First, one have to set the initial values ${\hat x_{0|-1}=0}$, ${P_{0|-1}=\Pi_0>0}$ and use the modified Cholesky
 decomposition to compute the factors $\{ U_{\Pi_{0}}, D_{\Pi_{0}}\}$, $\{ U_{R}, D_{R}\}$, $\{ U_{Q}, D_{Q}\}$. Then,
 we recursively update the $\{ U_{P_{k+1|k}}, D_{P_{k+1|k}}\}$ as follows ($k=1, \ldots, N$):
  given a pair of the pre-arrays $\{ {\mathcal A_k}, {\mathcal D_k}\}$
\begin{equation}
\label{adapt_filter:2}
{\mathcal A}_k^T=\left[
\begin{IEEEeqnarraybox}[][c]{c/c/c}
GU_{Q} & FU_{P_{k|k-1}} & 0\\
0 & HU_{P_{k|k-1}} & U_{R}
\end{IEEEeqnarraybox}
\right], \;
{\mathcal D}_k = {\rm diag}\{ D_Q, D_{P_{k|k-1}}, D_R\}
,
\end{equation}
apply the modified weighted Gram-Schmidt (MWGS) orthogonalization \cite{Bjorck1967} of the columns of ${\mathcal A}_k$ with respect to  the weighting matrix ${\mathcal D}_k$ to obtain a pair of the post-arrays $\{ \tilde{\mathcal A}_k, \tilde{\mathcal D}_k\}$
\begin{equation}
\label{adapt_filter:5}
 \tilde{\mathcal A}_k=\left[
\begin{IEEEeqnarraybox}[][c]{c/c}
U_{P_{k+1|k}} & K_{p,k}U_{R_{e,k}}\\
0 & U_{R_{e,k}}
\end{IEEEeqnarraybox}
\right], \;
  \tilde{\mathcal D}_k = {\rm diag}\{D_{P_{k+1|k}}, D_{R_{e,k}}\}
\end{equation}
such that
$$
  {\mathcal A}_k^T=\tilde{\mathcal A}_kB^T_k \quad \mbox{and} \quad {\mathcal A}_k^T{\mathcal D}_k{\mathcal A}_k=\tilde{\mathcal A}_k \tilde{\mathcal D}_k \tilde{\mathcal A}_k^T
$$
 where  ${\mathcal D}_{k} \in {\mathbb R}^{(n+m+q)\times (n+m+q)}$, ${\mathcal A}_k \in {\mathbb R}^{(n+m+q)\times (n+m)}$, $B_k \in {\mathbb R}^{(n+m+q)\times (n+m)}$ is the MWGS transformation that produces the block upper triangular matrix $\tilde{\mathcal A}_k   \in {\mathbb R}^{(n+m)\times (n+m)}$ and diagonal matrix $\tilde{\mathcal D}_k  \in {\mathbb R}^{(n+m)\times (n+m)}$.

The state estimate can be computed as follows:
\begin{equation}
\label{adapt_filter:9}
      \hat x_{k+1|k}=F\hat x_{k|k-1}+\left(K_{p,k}U_{R_{e,k}}\right)\bar e_k
\end{equation}
where $\bar e_{k} =U^{-1}_{R_{e,k}}e_k$, \; $e_k= z_k-H\hat x_{k|k-1}$.

\begin{remark}
\label{remark:1}
We note that the parenthesis in~(\ref{adapt_filter:9}) are used to indicate the quantities
that can be directly read off from the post-arrays.
\end{remark}

Instead of the conventional KF, which is known to be numerically
unstable, we wish to utilize stable UD-aCF filter presented above to compute the Log LF:
\begin{equation}
\label{LLF:conventional}
{\mathcal L}_{\theta}\left(Z_1^N\right)=-\frac{Nm}{2}\ln(2\pi) -
\frac{1}{2} \sum \limits_{k=1}^N \left\{
 \ln\left(\det R_{e,k}\right)+ e_k^T R_{e,k}^{-1}e_k \right\}
\end{equation}
where $Z_1^N=\{z_1,\ldots, z_N\}$ is $N$-step measurement history
and the innovations, $\{ e_k \}$, $e_k \sim {\cal
N}\left(0,R_{e,k}\right)$, are generated by the discrete-time
KF.

One can easily obtain the expression for Log
LF~(\ref{LLF:conventional}) in terms of the UD-aCF variables:
\begin{equation}
\label{eq:llfUD-CF}
 {\mathcal L}_{\theta}\left(Z_1^N\right)=
-\frac{Nm}{2}\ln(2\pi) - \frac{1}{2} \sum \limits_{k=1}^N \left\{
  \ln\left(\det D_{R_{e,k}}\right)+ \bar e_k^T D^{-1}_{R_{e,k}}\bar e_k
\right\}.
\end{equation}

Let $\theta=[\theta_1,\dots,\theta_p]$ denote the vector of parameters with respect to which the likelihood function is to be differentiated. Then from (\ref{eq:llfUD-CF}),  we have %for $i=1, \ldots, p$
$$
\frac{\partial  {\mathcal L}_{\theta}\left(Z_1^N\right)}{\partial \theta_i} =
 - \frac{1}{2} \sum \limits_{k=1}^N \left\{
  \frac{\partial \left[\ln\left(\det D_{R_{e,k}}\right)\right]}{\partial \theta_i}
+\frac{\partial \left[\bar e_k^T D^{-1}_{R_{e,k}}\bar e_k\right]}{\partial \theta_i}
\right\}
$$
where $i=1, \ldots, p$.

Taking into account that the matrix $D_{R_{e,k}}$ is diagonal
and using Jacobi's formula, we obtain the expression for the Log LG evaluation in terms of the UD-aCF variables ($i=1, \ldots, p$):
\begin{eqnarray}
\nonumber
\frac{\partial {\mathcal L}_{\theta}\left(Z_1^N\right)}{\partial \theta_i} & = &
-\frac{1}{2}\sum \limits_{k=1}^N \left\{
\tr { \frac{\partial D_{R_{e,k}}}{\partial
\theta_i}D^{-1}_{R_{e,k}}} +
2\frac{\partial \bar e_k^T}{\partial \theta_i}D^{-1}_{R_{e,k}}\bar e_k \right.\\
\label{grad-UD-CF}
& & \left. -\bar e_k^TD^{-2}_{R_{e,k}}\frac{\partial D_{R_{e,k}}}{\partial \theta_i}\bar e_k
\right\}.
\end{eqnarray}

Our goal is to compute Log
LF~(\ref{eq:llfUD-CF}) and Log LG~(\ref{grad-UD-CF}) by using the UD-aCF variables. As can be seen, the elements $\bar e_k$ and $D_{R_{e,k}}$ are readily available from UD based filter~(\ref{adapt_filter:2})--(\ref{adapt_filter:9}). Hence, our aim is to explain how the last two terms, i.e.
${\partial \bar e_k}/{\partial \theta_i}$ and ${\partial D_{R_{e,k}}}/{\partial \theta_i}$, $i=1, \ldots, p$, can be computed using quantities available from the UD-aCF algorithm.

\section{Main Results \label{sec:main}}

In this section, we present a simple and convenient technique that naturally augments any array UD based filter for computing derivatives of the filter variables.
To begin constructing the method, we note that each iteration of the UD based implementation has the following form: given a pair of the pre-arrays $\{ A, D_{w}\}$, compute a pair of the post-arrays $ \{U, D_{\beta} \}$ by means of the MWGS orthogonalization, i.e.
\begin{equation}
\label{assume:1}
 A^T=UB^T \quad \mbox{and} \quad A^TD_{w}A=UD_{\beta}U^T
\end{equation}
 where $A \in {\mathbb R}^{r\times s}$, $r>s$ and $B  \in {\mathbb R}^{r\times s}$ is the MWGS transformation that produces the block upper triangular matrix $U  \in {\mathbb R}^{s\times s}$. The diagonal matrices $D_w \in {\mathbb R}^{r\times r}$, $D_\beta \in {\mathbb R}^{s\times s}$ satisfy $B^TD_wB=D_\beta$ and  $D_w>0$ (see~\cite[Lemma~VI.4.1]{Bierman1977} for an extended explanation).

\begin{Lm}
\label{lemma:1}  Let entries of the pre-arrays $A$, $D_w$ in~(\ref{assume:1}) be known differentiable functions of a parameter $\theta$.
 Consider the transformation in~(\ref{assume:1}). Given the derivatives of the pre-arrays $A'_{\theta}$ and $(D_w)'_\theta$, the following formulas calculate the corresponding derivatives of the post-arrays:
\begin{equation}
 \label{lemma2:eq:1}
 U'_{\theta} = U\left({\bar L}^T_0+{\bar U}_0+{\bar U}_2\right)D^{-1}_\beta\quad \mbox{and} \quad
\left(D_\beta\right)'_\theta=2D_0+D_2
\end{equation}
where the quantities $\bar L_{0}$, $D_{0}$, $\bar U_{0}$ are, respectively, strictly lower triangular, diagonal and strictly upper triangular parts of the matrix product $B^TD_wA'_\theta U^{-T}$. Besides, $D_2$ and $\bar U_{2}$ are diagonal and strictly upper triangular parts of the product $B^T(D_w)'_\theta B$, respectively.
\end{Lm}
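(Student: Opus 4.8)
The plan is to differentiate the two defining relations in~(\ref{assume:1}) with respect to $\theta$ and then exploit the structural constraints (triangularity of $U$, diagonality of $D_\beta$) to read off the pieces uniquely. First I would differentiate the factorization $A^TD_wA=UD_\beta U^T$ to get
$$
(A'_\theta)^TD_wA + A^T(D_w)'_\theta A + A^TD_w A'_\theta = U'_\theta D_\beta U^T + U(D_\beta)'_\theta U^T + U D_\beta (U'_\theta)^T.
$$
Multiplying on the left by $U^{-1}$ and on the right by $U^{-T}$, and introducing $M := U^{-1}U'_\theta$ (so that $U'_\theta = UM$), the right-hand side becomes $MD_\beta + (D_\beta)'_\theta + D_\beta M^T$. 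On the left-hand side I would use $A^T = UB^T$, hence $U^{-1}A^T = B^T$ and $A U^{-T} = B$, to rewrite everything in terms of $B$: the left side collapses to $(B^TD_wA'_\theta U^{-T}) + (B^T(D_w)'_\theta B) + (B^TD_wA'_\theta U^{-T})^T$, using $B^TD_wB=D_\beta$ only implicitly. Naming $S := B^TD_wA'_\theta U^{-T}$ and $T := B^T(D_w)'_\theta B$, the master identity is
$$
S + S^T + T = MD_\beta + D_\beta M^T + (D_\beta)'_\theta .
$$

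Next I would decompose each side into its strictly-lower, diagonal, and strictly-upper triangular parts. Since $U$ is unit upper triangular, $M=U^{-1}U'_\theta$ is strictly upper triangular (the derivative of a unit-triangular matrix has zero diagonal, and $U^{-1}$ times strictly-upper stays strictly-upper); therefore $MD_\beta$ is strictly upper, $D_\beta M^T$ is strictly lower, and $(D_\beta)'_\theta$ is diagonal. Writing $S = \bar L_0 + D_0 + \bar U_0$ and $T = \bar L_1 + D_2 + \bar U_2$ (where by symmetry of $T$ one has $\bar L_1 = \bar U_2^T$ and similarly $S^T = \bar U_0^T + D_0 + \bar L_0^T$ contributes to the lower part), matching the three triangular components gives: diagonal part $2D_0 + D_2 = (D_\beta)'_\theta$, which is the second formula; strictly-upper part $\bar U_0 + \bar U_2 + \bar L_0^T = M D_\beta$ (here $\bar L_0^T$ is the transpose of the strictly-lower part of $S$, hence strictly upper), whence $M = (\bar L_0^T + \bar U_0 + \bar U_2)D_\beta^{-1}$ and $U'_\theta = UM = U(\bar L_0^T + \bar U_0 + \bar U_2)D_\beta^{-1}$, the first formula; the strictly-lower part is just the transpose of the upper one and carries no new information.

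The main obstacle I anticipate is justifying that the triangular structure is preserved under differentiation and that the matching is legitimate — specifically, verifying that $M := U^{-1}U'_\theta$ really is strictly upper triangular (so that its diagonal does not pollute the $(D_\beta)'_\theta$ equation), and checking that $D_w>0$ together with $B^TD_wB=D_\beta$ guarantees $D_\beta$ is invertible so that division by $D_\beta$ in the formula for $U'_\theta$ is valid. A secondary point to handle carefully is that the MWGS transformation $B$ itself depends on $\theta$, yet it does not appear differentiated in the final formulas; this works because we only ever use the two \emph{algebraic} identities $A^T=UB^T$ and $A^TD_wA=UD_\beta U^T$, and the combination $S = B^TD_wA'_\theta U^{-T}$ together with $T = B^T(D_w)'_\theta B$ absorbs all $\theta$-dependence of $B$ implicitly — I would make sure no spurious $B'_\theta$ term has been dropped when going from the differentiated factorization back to the $B$-form, which is exactly where the relation $AU^{-T}=B$ (equivalently $U^{-1}A^T = B^T$) does the work. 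Once these structural facts are in place, the result follows by reading off the three components of the master identity.
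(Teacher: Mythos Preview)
Your argument is correct, and it is genuinely different from the paper's route. The paper transposes to $A=BL$, introduces the orthonormal factor $T$ via $B=D_w^{-1/2}T D_\beta^{1/2}$, constructs the left inverse $B^{+}=D_\beta^{-1/2}T^{T}D_w^{1/2}$, and differentiates $B^{+}A=L$. This brings in $(B^{+})'_\theta B$, which is expanded via the product rule into a $(D_\beta)'_\theta$ term, a skew-symmetric piece $(T'_\theta)^{T}T=\bar U_1^{T}-\bar U_1$, and a $(D_w)'_\theta$ term; the auxiliary unknown $\bar U_1$ is then eliminated by matching triangular parts of $D_\beta L'_\theta L^{-1}$. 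You instead differentiate the symmetric identity $A^{T}D_wA=UD_\beta U^{T}$ directly, conjugate by $U^{-1}$ and $U^{-T}$, and use only the undifferentiated relation $A^{T}=UB^{T}$ to reduce the left side to $S+S^{T}+T$. This sidesteps the orthonormal factor, the pseudo-inverse, and the skew-symmetric auxiliary altogether, and it makes transparent why no $B'_\theta$ is needed: $B$ enters only through an algebraic substitution, never through a differentiated relation. The paper's approach has the minor conceptual benefit of exhibiting where the ``hidden'' degrees of freedom in $B$ go (into the skew part), but your derivation is shorter and uses strictly less machinery. Your stated obstacles are the right ones and are easily discharged: $M=U^{-1}U'_\theta$ is strictly upper triangular because $U$ is unit upper triangular, and $D_\beta=B^{T}D_wB$ with $D_w>0$ and $B$ of full column rank gives $D_\beta>0$, hence invertible.
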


\begin{proof}
For the sake of simplicity we transpose the first equation in~(\ref{assume:1}) to obtain $A=BL$. As shown in~\cite{JoverKailathSayed1986}, the matrix $B$ can be represented in the form of $B=D_w^{-1/2}TD_\beta^{1/2}$ where $T$ is the matrix with orthonormal columns, i.e. $T^TT=I$, and $I$ is an identity matrix. Next, we define $B^+=D_\beta^{-1/2}T^{\,T}D_w^{1/2}$ and note that $B^+B=I$. The mentioned matrices $B$, $B^+$ exist since the  $D_w$, $D_\beta$ are invertible. Indeed, the diagonal matrix $D_w$ is a positive definite matrix, i.e. $D_w>0$, and $D_\beta$ satisfies $B^TD_wB=D_\beta$ (see~\cite[Lemma~VI.4.1]{Bierman1977} for further details).

Multiplication of both sides of the equality $A=BL$ by the matrix $B^+$  and, then, their differentiation with respect to $\theta$ yield
$$B^+A'_\theta+\left(B^+\right)'_\theta A=L'_\theta.$$ Therefore,
$B^+A'_\theta L^{-1}+\left(B^+\right)'_\theta A L^{-1}=L'_\theta L^{-1}$ or
\begin{equation}
\label{eq:main13}
%B^+A'_\theta L^{-1}+\left(B^+\right)'_\theta A L^{-1}=L'_\theta L^{-1}\quad \mbox{or} \quad
B^+A'_\theta L^{-1}+\left(B^+\right)'_\theta B=L'_\theta L^{-1}.
\end{equation}

Now, we consider the product $\left(B^+\right)'_\theta B$ in~(\ref{eq:main13}). For any diagonal matrix $D_\beta$, we have
$\left(D_\beta^{-1/2}\right)'_\theta = - D_\beta^{-1/2} \left(D_\beta^{1/2}\right)'_\theta  D_\beta^{-1/2}$,
and
$\left(D_\beta\right)'_\theta = 2D_\beta^{1/2}\left( D_\beta^{1/2}\right)'_{\theta}$.

Taking into account that $B^+=D_\beta^{-1/2}T^{\,T}D_w^{1/2}$, we further obtain
\begin{eqnarray}
\nonumber
\left(B^+\right)'_\theta B & = & \left[\left(D_\beta^{-1/2}\right)'_\theta T^TD_w^{1/2}+D_\beta^{-1/2}\left(T'_\theta\right)^T D_w^{1/2}\right.\\
\nonumber
& & \left.+D_\beta^{-1/2}T^T\left(D_w^{1/2}\right)'_\theta\right]D_w^{-1/2}T D_\beta^{1/2}\\
\nonumber
 & = & \left(D_\beta^{-1/2}\right)'_\theta D_\beta^{1/2}
+D_\beta^{-1/2}\left(T'_\theta\right)^T T D_\beta^{1/2}\\
\nonumber
& & +D_\beta^{-1/2}T^T \left(D_w^{1/2}\right)'_\theta D_w^{-1/2}T D_\beta^{1/2} \\
\nonumber
 & = & -\frac12 D^{-1}_\beta \left(D_\beta\right)'_\theta+D_\beta^{-1/2}\left(T'_\theta\right)^T T D_\beta^{1/2}\\
\label{eq:main4}
& & +D_\beta^{-1/2}T^T \left(D_w^{1/2}\right)'_\theta D_w^{-1/2}T D_\beta^{1/2}.
\end{eqnarray}

Next, we study the last term in~(\ref{eq:main4}). First, taking into account that $D_w$ is a diagonal matrix, we derive
$$
\left(D_w^{1/2}\right)'_\theta D_w^{-1/2}  =  \frac12 D_w^{-1/2}\left(D_w\right)'_\theta D_w^{-1/2}=\frac12 D_w^{-1}\left(D_w\right)'_\theta.
$$
Thus, the above formula and an equality $T=D_w^{1/2}BD_\beta^{-1/2}$ yield
\begin{equation}\label{eq:main9}
D_\beta^{-1/2}T^T \left(D_w^{1/2}\right)'_\theta D_w^{-1/2}T D_\beta^{1/2} =
\frac12 D_\beta^{-1}B^T\left(D_w\right)'_\theta B.
\end{equation}

Furthermore, we show that the term $\left(T'_\theta\right)^T T$ required in~(\ref{eq:main4}) is a skew symmetric matrix. For that, we differentiate both sides of the formula ${T^TT=I}$ with respect to $\theta$ and arrive at
${\left(T'_\theta\right)^T T + T^T T'_\theta=0}$,
 or in the equivalent form $$\left(T'_\theta\right)^T T=-\left(\left(T'_\theta\right)^T T\right)^T.$$ The latter implies that the matrix $\left(T'_\theta\right)^T T$ is skew symmetric and can be presented as a difference of two matrices, i.e. $\left(T'_\theta\right)^T T={\bar U}_1^T-{\bar U}_1$
where $\bar U_1$ is a strictly upper triangular matrix.

Final substitution of $\left(T'_\theta\right)^T T={\bar U}_1^T-{\bar U}_1$  and (\ref{eq:main9}) into~(\ref{eq:main4}) and, then, into~(\ref{eq:main13}) yields
\begin{eqnarray}
\nonumber
L'_\theta L^{-1} & = & B^+A'_\theta L^{-1}+
 D_\beta^{-1}\left[-\frac12 \left(D_\beta\right)'_\theta \right.\\
\label{eq:main14}
& & \left.+D_\beta^{1/2}({\bar U}_1^T-{\bar U}_1)D_\beta^{1/2}
 +\frac12 B^T\left(D_w\right)'_\theta B\right].
\end{eqnarray}

Next, from $B^+=D_\beta^{-1/2}T^T D_w^{1/2}$ and $T=D_w^{1/2}BD_\beta^{-1/2}$, we derive $B^+=D_\beta^{-1}B^TD_w$. Further multiplication of both sides of~(\ref{eq:main14}) by $D_\beta$ yields
\begin{eqnarray}
\nonumber
D_\beta L'_\theta L^{-1} & = & B^T D_wA'_\theta L^{-1}-\frac12 \left(D_\beta\right)'_\theta
+D_\beta^{1/2}({\bar U}_1^T-{\bar U}_1)D_\beta^{1/2}\\
\label{eq:main15}
& &
+\frac12 B^T\left(D_w\right)'_\theta B.
\end{eqnarray}

Now, let us discuss equation~(\ref{eq:main15}) in details. We note that the term $B^T D_wA'_\theta L^{-1}$ in~(\ref{eq:main15}) is a full matrix. Hence, it can be represented in the form of $B^TD_wA'_\theta L^{-1}={\bar L}_0+D_0+{\bar U}_0$ where $\bar L_0$, $D_0$ and $\bar U_0$ are, respectively, strictly lower triangular, diagonal and strictly upper triangular parts of $B^TD_wA'_\theta L^{-1}$. Next, the matrix product $B^T\left(D_w\right)'_\theta B$ in~(\ref{eq:main15}) is a symmetric matrix and, hence, it has the form $B^T\left(D_w\right)'_\theta B={\bar U}_2^T+D_2+{\bar U}_2$
where $D_2$ and $\bar U_2$ are, respectively, diagonal and strictly upper triangular parts of $B^T\left(D_w\right)'_\theta B$. Thus, equation~(\ref{eq:main15}) can be represented in the following form:
\begin{eqnarray}
\nonumber
D_\beta L'_\theta L^{-1} & = & \underbrace{{\bar L}_0+D_0+{\bar U}_0}_{B^T D_wA'_\theta L^{-1}}-\frac12 \left(D_\beta\right)'_\theta+D_\beta^{1/2}({\bar U}_1^T-{\bar U}_1)D_\beta^{1/2}\\
\label{eq:main16}
& & +\frac12 \underbrace{\left({\bar U}_2^T+D_2+{\bar U}_2\right)}_{B^T\left(D_w\right)'_\theta B}.
\end{eqnarray}

Next, we note that the left-hand side matrix in~(\ref{eq:main16}), i.e.  the matrix $D_\beta L'_\theta L^{-1}$, is a strictly lower triangular (since $L$ is a unit lower triangular matrix). Hence, the matrix on the right-hand side of~(\ref{eq:main15}) should be also a  strictly lower triangular matrix. In other words, the strictly upper triangular and diagonal parts of the matrix  on the right-hand side of~(\ref{eq:main16}) should be zero. Hence, the formulas ${D_0-\frac12 \left(D_\beta\right)'_\theta +\frac12 D_2=0}$,
${\bar U_0-D_\beta^{1/2}{\bar U}_1D_\beta^{1/2}+\frac12 \bar U_2=0}$ imply that
\begin{equation}\label{eq:main17}
\left(D_\beta\right)'_\theta = 2D_0+D_2\quad \mbox{and} \quad
D_\beta^{1/2}{\bar U}_1D_\beta^{1/2} = \bar U_0+\frac12 \bar U_2.
\end{equation}
Clearly, the first equation in~(\ref{eq:main17}) is exactly the second formula in~(\ref{lemma2:eq:1}). Eventually, the substitution of both formulas in~(\ref{eq:main17}) into~(\ref{eq:main16}) validates the first relation in~(\ref{lemma2:eq:1}). More precisely, it results in
$
{D_\beta L'_\theta L^{-1}={\bar L}_0+{\bar U}^T_0+{\bar U}_2^T}$, and the latter formula means that
$
{L'_\theta =D^{-1}_\beta \left({\bar L}_0+{\bar U}^T_0+{\bar U}_2^T\right)L}
$
where $L$ stands for $U^T$. This completes the proof of Lemma~\ref{lemma:1}.
\end{proof}

We see that the proposed computational procedure utilizes only the pre-arrays $A$, $D_{w}$,
their derivatives with respect to the unknown system parameters, the
post-arrays  $U$, $D_{\beta}$ and the  MWGS orthogonal transformation in
order to compute the derivatives of the post-arrays.

\section{UD Filter Sensitivity Evaluation \label{sec:method}}

Further, we suggest a general computation scheme that naturally
extends any array UD based filtering algorithm to the above-mentioned
derivative evaluation. We stress that our method allows the {\it
filter and Riccati-type sensitivity} equations to be updated in
terms of stable array UD filters.

To illustrate the proposed approach, we apply Lemma~1 to the UD-aCF algorithm presented in Section~\ref{UD-filter} with $r=m+n+q$, $s=m+n$,  and the following pre-, post-arrays from~(\ref{adapt_filter:2}), (\ref{adapt_filter:5}):
$$
D_{w} = {\mathcal D}_k, \quad A={\mathcal A}_k \quad \mbox{and} \quad
D_{\beta}=\tilde{\mathcal D}_k, \quad U=\tilde{\mathcal A}_k .
$$
\subsection{Summary of Computations. Extended UD based KF scheme.}
\begin{tabbing}
{\it Step~0.\;} Set a current value of $\hat \theta$, \\
{\it Step~1.\;} Evaluate  \= $\hat F=F\left|_{\hat \theta}\right.$,
$\hat G =G\left|_{\hat \theta} \right.$, $\hat H = H\left|_{\hat
\theta} \right.$,  \\
 {\it \phantom{Step~1.}} $\hat Q = Q\left|_{\hat \theta} \right.$, $\hat R= R\left|_{\hat\theta} \right.$, $\hat \Pi_0 = \Pi_0\left|_{\hat \theta}\right.$;\\
{\it \phantom{Step~1.}}
$\displaystyle\frac{\partial \hat
F}{\partial \theta_i}=\left.\displaystyle\frac{\partial F}{\partial
\theta_i}\right|_{\hat \theta}$,
$\displaystyle\frac{\partial \hat
G}{\partial \theta_i}=\left.\displaystyle\frac{\partial G}{\partial
\theta_i}\right|_{\hat \theta}$, $\displaystyle\frac{\partial \hat
H}{\partial \theta_i}=\left.\displaystyle\frac{\partial H}{\partial
\theta_i}\right|_{\hat \theta}$,\\
{\it \phantom{Step~1.}}
 $\displaystyle\frac{\partial \hat
Q}{\partial \theta_i}=\left.\displaystyle\frac{\partial Q}{\partial
\theta_i}\right|_{\hat \theta}$,
$\displaystyle\frac{\partial \hat
R}{\partial \theta_i}=\left.\displaystyle\frac{\partial R}{\partial
\theta_i}\right|_{\hat \theta}$;\\
{\it Step~2.\;} Set  the initial conditions:
$P_{0|-1}=\hat \Pi_{0}$, $ \hat x_{0|-1}= 0$  and \\
{\it \phantom{Step~2.}}
 $\displaystyle\frac{\partial P_{0|-1}}{\partial
\theta_i}=\displaystyle\frac{\partial \hat \Pi_0}{\partial
\theta_i}$,
  $\displaystyle\frac{\partial \hat
x_0}{\partial \theta_i}=0$. \\
{\it Step 3.\;} Use the modified Cholesky decomposition to find \\
{\it \phantom{Step 2.\;}}
$\{ U_{\hat \Pi_{0}}, D_{\hat \Pi_{0}}\}$, $\{ U_{\hat R}, D_{\hat R}\}$, $\{ U_{\hat Q}, D_{\hat Q}\}$ \\
{\it \phantom{Step 2.\;}} and   %, then, compute
$\left\{\displaystyle\frac{\partial U_{\hat \Pi_{0}}}{\partial \theta_i}, \displaystyle\frac{\partial D_{\hat \Pi_{0}}}{\partial \theta_i}\right\}$,
$\left\{\displaystyle\frac{\partial U_{\hat R}}{\partial \theta_i}, \displaystyle\frac{\partial D_{\hat R}}{\partial \theta_i}\right\}$,
$\left\{\displaystyle\frac{\partial U_{\hat Q}}{\partial \theta_i}, \displaystyle\frac{\partial D_{\hat Q}}{\partial \theta_i}\right\}$; \\
{\it Step~4.$\;$} For \=$t_k$, $k=0,\ldots,N-1$, do\\
{\it Step~5.} \>Apply \=a numerically stable array UD based filter:\\
{\it Step~6.} \> \> Form the pre-arrays $A$, $D_w$ by using \\
{\it \phantom{Step~6.}} \> \> the matrices from Step~1; \\
{\it Step~7.} \> \> Compute the post-arrays $U$, $D_{\beta}$  using \\
{\it \phantom{Step~7.}} \> \> the MWGS algorithm.  Save $B$, $U$, $D_{\beta}$;\\
{\it Step~8.} \> \> Compute the state estimate $\hat x_{k+1|k}$ \\
{\it \phantom{Step~8.}} \> \> according to (\ref{adapt_filter:9}).\\
{\it Step~9.} \> Apply  \= the designed derivative computation method  \\
{\it \phantom{Step~9.}} \> (for each $\theta_i: i=1, \ldots, p$):\\
{\it Step~10.} \> \> Form the derivatives $\displaystyle\frac{\partial A}{\partial \theta_i}$, $\displaystyle\frac{\partial D_w}{\partial \theta_i}$ by using \\
{\it \phantom{Step~10.}} \> \> the matrices from Steps~1 and 3; \\
{\it Step~11.} \> \> Calculate $B^TD_w \displaystyle\frac{\partial A}{\partial \theta_i} U^{-T}$ \\
{\it \phantom{Step~11.}} \> \> (use the quantities from Steps~6 and 10); \\
{\it Step~12.} \> \> Split it into $\bar L_0(i)$, $D_0(i)$ and $\bar U_0(i)$. \\
{\it \phantom{Step~12.}} \> \> Save $\{ \bar L_0(i), D_0(i), \bar U_0(i) \}$;\\
{\it Step~13.} \> \> Calculate $B^T \displaystyle\frac{\partial D_w}{\partial \theta_i} B$ \\
{\it \phantom{Step~13.}} \> \> (use the quantities from Steps~7, 10); \\
{\it Step~14.} \> \> Split it into $\bar U^T_2(i)$, $D_2(i)$ and $\bar U_2(i)$. \\
{\it \phantom{Step~14.}} \> \> Save $\{ D_2(i), \bar U_2(i) \}$;\\
{\it Step~15.} \> \> Find $\displaystyle\frac{\partial
U}{\partial \theta_i}=U\left[{\bar L}^T_0(i)+{\bar U}_0(i)+{\bar U}_2(i)\right]D^{-1}_\beta$ \\
{\it \phantom{Step~15.}} \> \> (use Steps~7, 12, 14);\\
{\it Step~16.} \> \> Evaluate \= $\displaystyle\frac{\partial
D_{\beta}}{\partial \theta_i}  = 2D_0(i)+D_2(i)$ \\
{\it \phantom{Step~16.}} \> \> (use the saved values from Steps~12, 14);\\
{\it Step~17.} \> \> Evaluate the state sensitivity as \\
\> \>
$
\displaystyle\frac{ \partial \hat x_{k+1|k}}{\partial \theta_i} = \frac{ \partial \left( F\hat x_{k|k-1}\right)}{\partial \theta_i} +\frac{ \partial \left(K_{p,k}U_{R_{e,k}}\right)}{\partial \theta_i}\bar e_k$ \\
{\it \phantom{Step~17.}} \> \>
\phantom{$\displaystyle\frac{ \partial \hat x_{k+1|k}}{\partial \theta_i} =$}
$+\left(K_{p,k}U_{R_{e,k}}\right)\displaystyle\frac{\partial \bar e_{k}}{\partial \theta_i}$;\\
{\it Step~18.\quad}  End.
\end{tabbing}

\begin{remark}
The new  approach naturally extends any UD based KF implementation on the filter sensitivities evaluation. Additionally, this allows Log
LF~(\ref{eq:llfUD-CF}) and Log LG~(\ref{grad-UD-CF}) to be computed simultaneously. Hence, such
methods are ideal for simultaneous state estimation and parameter
identification.
\end{remark}

\begin{table*}
\renewcommand{\arraystretch}{1.3}
\caption{Illustrative calculation for Example~\ref{ex:1:1}}  \label{tab:matrix}
\centering
{\scriptsize
\begin{tabular}{|l|l|}
\hline
 Steps~5, 6. & We are given
$A=\left[
\begin{IEEEeqnarraybox}[][c]{c/c}
{\theta^5}/20 & \theta^4/8 \\
{\theta^4}/8 & \theta^3/3 \\
{\theta^3}/{6} & {\theta^2}/{2}
\end{IEEEeqnarraybox}
\right]$, $D_w = \left[
\begin{IEEEeqnarraybox}[][c]{c/c/c}
\theta & 0 & 0 \\
0 & \theta^2 & 0 \\
0 & 0 & \theta^3
\end{IEEEeqnarraybox}
\right]$. Hence, $\left.A\right|_{\theta=2}=\left[
\begin{IEEEeqnarraybox}[][c]{c/c}
8/5 & 2\\
2 & 8/3 \\
4/3 & 2
\end{IEEEeqnarraybox}
\right]$, $\left.D_w\right|_{\theta=2}=\left[
\begin{IEEEeqnarraybox}[][c]{c/c/c}
2& 0 & 0\\
0 & 4 & 0  \\
0 & 0 & 8
\end{IEEEeqnarraybox}
\right]$ \\ \rul
%\hline
Step~7. & Compute the post-arrays
$U=\left[
\begin{IEEEeqnarraybox}[][c]{c/c}
 1.0000   & 0.7169 \\
         0  &  1.0000
\end{IEEEeqnarraybox}
\right]$, $D_{\beta}=\left[
\begin{IEEEeqnarraybox}[][c]{c/c}
 0.1672   & 0 \\
         0  &  68.4444
\end{IEEEeqnarraybox}
\right]$  where
$B=\left[
\begin{IEEEeqnarraybox}[][c]{c/c}
  0.1662   & 2.0000 \\
  0.0883  &  2.6667 \\
  -0.1004  &  2.0000
 \end{IEEEeqnarraybox}
\right]$. \\ \rul
 Steps~9, 10. & We are given
$A'_{\theta}=\left[
\begin{IEEEeqnarraybox}[][c]{c/c}
{\theta^4}/4 & \theta^3/2\\
{\theta^3}/2 & \theta^2\\
\theta^2/2 &  \theta
\end{IEEEeqnarraybox}
\right]$, $\left( D_w \right)'_{\theta}=\left[
\begin{IEEEeqnarraybox}[][c]{c/c/c}
1 & 0 & 0 \\
0 & 2\theta & 0 \\
0 & 0 & 3\theta^2
\end{IEEEeqnarraybox}
\right]$. So,  $\left.A'_{\theta}\right|_{\theta=2}=\left[
\begin{IEEEeqnarraybox}[][c]{c/c}
4 & 4 \\
4 & 4 \\
2 & 2
\end{IEEEeqnarraybox}
\right]$, $\left. \left( D_w \right)'_{\theta}\right|_{\theta=2}=\left[
\begin{IEEEeqnarraybox}[][c]{c/c/c}
1 & 0 & 0 \\
0 & 4 & 0 \\
0 & 0 & 12
\end{IEEEeqnarraybox}
\right]$.  \\ \rul
Steps~11, 12. & Compute
$B^TD_w A'_{\theta} U^{-T}$. Save $\bar L_0(1)=\left[
\begin{IEEEeqnarraybox}[][c]{c/c}
   0 &  0 \\
   25.6693 &  0
\end{IEEEeqnarraybox}
\right]$, $D_0(1) =
\left[\begin{IEEEeqnarraybox}[][c]{c/c}
  0.3216  & 0 \\
   0 &  90.6667
\end{IEEEeqnarraybox}
\right]$,
$\bar U_0(1) =
\left[\begin{IEEEeqnarraybox}[][c]{c/c}
  0 & 1.1359 \\
   0 &  0
\end{IEEEeqnarraybox}
\right]$.\\ \rul
 Step~13, 14. & Evaluate
$B^T\left(D_w\right)'_{\theta}B$. Save $D_2(1) =
\left[\begin{IEEEeqnarraybox}[][c]{c/c}
  0.1799  & 0 \\
   0 &  80.4444
\end{IEEEeqnarraybox}
\right]$,
$\bar U_2(1) =
\left[\begin{IEEEeqnarraybox}[][c]{c/c}
  0 & -1.1359 \\
   0 &  0
\end{IEEEeqnarraybox}
\right]$.\\ \rul
Step~15, 16. & Finally, the results that we are looking for are
$\left.U'_{\theta}\right|_{\theta=2}= \left[
\begin{IEEEeqnarraybox}[][c]{c/c}
  0    & 0.3750 \\
         0   & 0
\end{IEEEeqnarraybox}
\right]$,
$\left.\left(D_{\beta}\right)'_{\theta}\right|_{\theta=2}= \left[
\begin{IEEEeqnarraybox}[][c]{c/c}
  0.8231    & 0 \\
         0   & 261.7778
\end{IEEEeqnarraybox}
\right]$. \\[6pt]
\hline
\end{tabular}
}
\end{table*}

\section{Numerical Examples\label{experiments}}

\subsection{Simple Test Problem}

First, we would like to check our theoretical derivations presented in Lemma~1. To do so,
we apply the proposed UD based computational scheme to the following simple test problem.

\begin{exmp}
\label{ex:1:1} For the given pre-arrays
\begin{equation}\label{A:1}
 A= \left[
\begin{IEEEeqnarraybox}[][c]{c/c}
{\theta^5}/20 \; & \;\theta^4/8 \\
{\theta^4}/8 \; & \;\theta^3/3 \\
{\theta^3}/{6} \; & \; {\theta^2}/{2}
\end{IEEEeqnarraybox}
\right] \mbox{ and } D_w = \left[
\begin{IEEEeqnarraybox}[][c]{c/c/c}
\theta & 0 & 0 \\
0 & \theta^2 & 0 \\
0 & 0 & \theta^3
\end{IEEEeqnarraybox}
\right],
\end{equation}
compute the derivatives of the post-arrays $U'_{\theta}$, $\left(D_{\beta}\right)'_{\theta}$, say, at ${\theta=2}$ where  $U$, $D_{\beta}$
comes from~(\ref{assume:1}).
\end{exmp}

Since the pre-arrays $A$ and $D_w$ are fixed by (\ref{A:1}), we skip
Steps~0--3 in the above-presented computational scheme. We also skip steps~8 and 17, because we do not need to compute the state estimate (and its derivatives) in this simple test problem. Besides, the unknown parameter $\theta$ is a scalar
value, i.e. ${p=1}$. Next, we show the
detailed explanation of only one iteration step, i.e. we set ${k=1}$ in Step~4. The obtained results are summarized in Table~1.
As follows from~(\ref{assume:1}), we have $A^T D_w A = U D_{\beta} U^T$. Thus, the derivatives
of both sides of the latter formula must also agree. We compute the norm $\left|\left|{(A^T D_w A)}'_{\theta
=2}-{(U D_{\beta} U^T)}'_{\theta=2}\right|\right|_2=5.68\cdot 10^{-14}$. This
confirms the correctness of the calculation and the above theoretical derivations presented in Lemma~1.

\subsection{Application to the Maximum Likelihood Estimation of Unknown System Parameters}

The computational approach presented in this paper can be used for
the efficient evaluation of the Log LF and its gradient required in gradient-search optimization algorithms for the maximum
likelihood estimation of unknown system parameters. To demonstrate this, we apply the proposed UD based algorithm to the problem from aeronautical equipment engineering.

\begin{exmp}
\label{ex:1} Consider a simplified version of the instrument error model for one channel of the Inertial Navigation System (INS) of semi-analytical type given as follows~\cite{Broxmeyer56}:
$$
\left[
\begin{IEEEeqnarraybox}[][c]{c}
\Delta v_x (t_{k+1})\\
\beta (t_{k+1})\\
m_{Ax} (t_{k+1})\\
n_{Gy} (t_{k+1})
\end{IEEEeqnarraybox}
\right]
  =
\left[
\begin{IEEEeqnarraybox}[][c]{c/c/c/c}
1 & -\tau g & \tau & 0\\
\tau/a & 1 & 0 & \tau\\
0 & 0 & b_1 & 0\\
0 & 0 & 0 & 1
\end{IEEEeqnarraybox}
\right]
\left[
\begin{IEEEeqnarraybox}[][c]{c}
\Delta v_x (t_{k})\\
\beta (t_{k})\\
m_{Ax} (t_{k})\\
n_{Gy} (t_{k})
\end{IEEEeqnarraybox}
\right]
+
\left[
\begin{IEEEeqnarraybox}[][c]{c}
0\\
0\\
a_1\\
0
\end{IEEEeqnarraybox}
\right]w (t_k),
$$
$$
z(t_k) =  \Delta v_x(t_k) +v(t_k)
$$
where $w(t_k) \sim {\mathcal N}(0,1)$, $v(t_k) \sim {\mathcal N}(0,0.01)$, $x_0 \sim {\mathcal N}(0, I_4)$ and subscripts $x$, $y$, $A$, $G$ denote ``axis $Ox$'', ``axis $Oy$'', ``Accelerometer'', and ``Gyro'', respectively. The $\Delta v_x$ is the  random error in reading velocity along axis $Ox$ of a gyro-stabled platform (GSP), $\beta$ is the angular error in determining the local vertical, $m_{Ax}$ is the accelerometer reading random error, and $n_{Gy}$ is the gyro constant drift rate. The quantities $a_1 \simeq H_1\sqrt{2\gamma_1\tau}$ and $b_1 \simeq 1-\gamma_1\tau$ depend on the unknown parameter $\gamma_1$ that needs to be estimated.
\end{exmp}

\begin{figure*}
\centerline{\subfigure[The negative Log LF ]{\includegraphics[trim = 5mm 5mm 0mm 5mm, clip, keepaspectratio=true, width=2.8in]{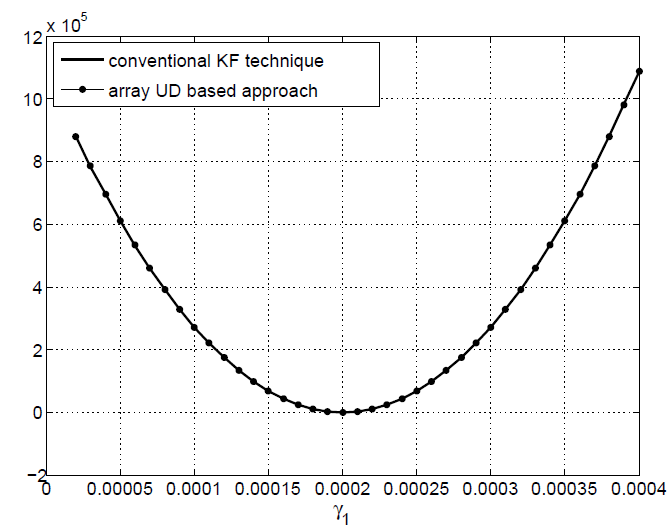}
\label{fig:1:a}}
\hfil
\subfigure[The negative Log LG]{\includegraphics[trim = 5mm 5mm 0mm 5mm, clip, keepaspectratio=true,width=2.8in]{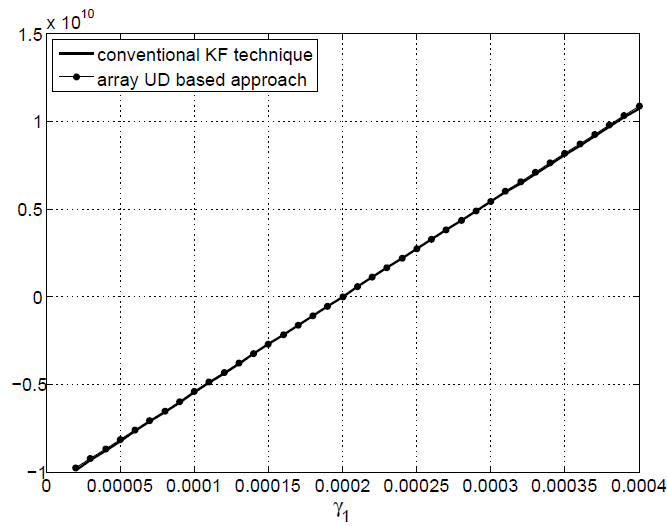}
\label{fig:1:b}}}
\caption{The negative Log LF and its gradient computed by the proposed array UD approach and the conventional KF technique}
\label{fig:1}
\end{figure*}

In our simulation experiment we assume that the {\it true} value of $\gamma_1$ is ${\gamma^*_1=2\cdot 10^{-4}}$. We  compute the negative Log LF and its gradient by the proposed array UD based scheme and then compare the results to those produced by the classical, i.e. the conventional KF approach. The computations are done on the interval $[10^{-5}, 4\cdot 10^{-4}]$ with the step $10^{-5}$. The outcomes of these experiments are illustrated by Fig.~\ref{fig:1}. As can be seen from Fig.~\ref{fig:1:b}, all algorithms for the gradient evaluation produce exactly the same result and give the same zero
point. Besides, the zero point coincides with the minimum point of the negative Log LF; see Fig.~\ref{fig:1:a}. Furthermore, it is readily seen that the obtained maximum likelihood estimate, $\hat \gamma^{MLE}_1$, coincides with the true value ${\gamma^*_1=2\cdot 10^{-4}}$. All these evidences substantiate our
theoretical derivations in Sections~\ref{sec:main},~\ref{sec:method}.

\subsection{Ill-Conditioned Test Problems}

Although we learnt from the previous examples that both methods, i.e. the conventional KF approach and the proposed array UD based scheme,
produce exactly the same results, numerically they no longer agree. To illustrate this, we consider the following ill-conditioned test problem.
\begin{exmp}
\label{ex:3:new}
Consider the state-space model~(\ref{eq2.1})--(\ref{eq2.2})
with $\{F, G, H, \Pi_0, Q, R \}$ given by
$$
F=I_3, G=0, R= I_2\delta^2\theta^2, \Pi_0=I_3 \theta^2, H=\left[
     \begin{IEEEeqnarraybox}[][c]{c/c/c}
      1 &  1 & 1\\
      1 &  1 & 1+\delta
      \end{IEEEeqnarraybox}
     \right]
$$ where $\theta$ is an unknown system parameter, that needs to be
estimated. To
simulate roundoff we assume that $\delta^2<\epsilon_{\text{roundoff}}$, but
$\delta>\epsilon_{\text{roundoff}}$ where $\epsilon_{\text{roundoff}}$ denotes the
unit roundoff error\footnote{Computer roundoff for floating-point
arithmetic is often characterized by a single parameter
$\epsilon_{\text{roundoff}}$, defined in different sources as the largest
number such that either $1+\epsilon_{\text{roundoff}} = 1$ or
$1+\epsilon_{\text{roundoff}}/2 = 1$ in machine precision.}.%, i.e. the machine precision limit.
\end{exmp}

When $\theta=1$, Example~\ref{ex:3:new} coincides with well-known
ill-conditioned test problem from~\cite{GrewalAndrews2001}. The difficulty to be explored is in the matrix
inversion. As can be seen, although ${rank \  H=2}$,  for any fixed value of the
parameter $\theta \ne 0$, the matrices ${R_{e,1}=R+H\Pi_{0}H^T}$ and
$(R_{e,1})'_{\theta}$ are ill-conditioned in machine precision, i.e. as $\delta \to
\epsilon_{\text{roundoff}}$. This leads to a failure of the
conventional KF approach and, as a result, destroys the entire
computational scheme. So, we plan to observe and compare performance
of the ``conventional KF technique'' and our new ``array UD based approach'' in such a situation. Additionally, we compare these techniques with the
``array SR based approach'' presented in~\cite{Kulikova2009IEEE}.  Also, we have to stress that Example~\ref{ex:3:new} represents numerical difficulties only for the covariance-type implementations. As a result, the SRIF based method~\cite{Bierman1990} can not be used  in our comparative study because of its information-type. This follows from~\cite{VerhaegenDooren1986} where numerical insights of various KF implementations are analyzed and discussed at large.

\begin{table*}
\renewcommand{\arraystretch}{1.3}
\caption{Effect of roundoff errors in ill-conditioned test problems} \label{MC-estimators}
\centering
{\scriptsize
\begin{tabular}{c|c||c|c|c||c|c|c||c|c|c}
\hline  & exact answer &
\multicolumn{3}{c||}{``differentiated" KF technique} &
\multicolumn{3}{c||}{array SR based approach} & \multicolumn{3}{c}{array UD based approach} \\
\hline $\delta$ & $\theta^*$ & Mean &  RMSE & MAPE &  Mean & RMSE
& MAPE &  Mean & RMSE
& MAPE \\
\hline
 $ 10^{-2\phantom{0}}$ & 7 &  6.9984   & 0.1243  &  1.4438 &  6.9984 & 0.1243  &  1.4438  & 6.9984 & 0.1243  & 1.4438\\
 $ 10^{-3\phantom{0}}$ & 7 &  7.0035   & 0.1233  &  1.4096 &   6.9996 & 0.1227  &  1.4011 & 7.0012 & 0.1227 & 1.4011 \\
 $ 10^{-4\phantom{0}}$ & 7 & 7.5116    & 1.1953  &  10.8291 & 7.0083 & 0.1111  & 1.2794  & 7.0083 & 0.1111 & 1.2794  \\
 $ 10^{-5\phantom{0}}$ & 7 & 5.4700   &  5.1658 & 72.0857 &  6.9966 & 0.1274  &  1.4706   & 6.9966 & 0.1274  & 1.4706\\
 $ 10^{-6\phantom{0}}$ & 7 &  4.0378   & 12.1748 & 157.6825 & 6.9979 & 0.1266  &  1.4581  & 6.9981 &  0.1264 & 1.4555\\
%$ 10^{-7\phantom{0}}$ & 7 &  2.7403   & 8.2968  & 98.5812 &   7.0167 & 0.1450 &  1.6832  & 7.0020 &  0.1236 & 1.4165\\
  \hline
\end{tabular}
}
\end{table*}

In order to judge the quality of
each above-mentioned computational technique, we conduct the following set of
numerical experiments. Given the ``true'' value of the parameter
$\theta$, say ${\theta^*=7}$, the system is simulated for $1000$
samples for various values of the problem conditioning parameter $\delta$. Then, we use the generated
data to solve the inverse problem, i.e. to compute the maximum
likelihood estimates by the above-mentioned  three different approaches. The same
gradient-based optimization method with the same initial value of
${\theta^{(0)}=1}$ is applied in all estimators.  More precisely, the optimization method utilizes the negative Log LF and its gradient
that are calculated by the examined techniques.  All methods are run in
MATLAB with the same precision (64-bit floating point).
We performed $250$ Monte Carlo simulations and report the posterior mean for
$\theta$, the root mean squared error (RMSE) and the
mean absolute percentage error (MAPE).

Having carefully analyzed the obtained results presented in Table~\ref{MC-estimators}, we conclude the following. When
${\delta=10^{-2}}$,  all the considered approaches produce exactly the same result. The posterior means from
all the estimators are all close to the ``true" value, which is equal to $7$. The RMSE and MAPE are equally small. Hence, we conclude that all three methods work similarly well in this well-conditioned situation. However, as ${\delta \to \epsilon_{\text{roundoff}}}$,
which corresponds to growing ill-conditioning, the conventional KF
approach degrades much faster than the array UD and SR based approaches. Already for ${\delta=10^{-4}}$, the RMSE
[MAPE] of the classical technique (i.e. the ``differentiated" KF) is
$\approx 11$ times [$\approx 8.5$ times] greater that of the  RMSE [MAPE] from the array UD and SR methods. For
${\delta\le 10^{-5}}$, the output generated by the ``differentiated"
KF is of no sense because of huge errors.
On the other hand, the new array UD based scheme and the previously proposed array SR method work robustly, i.e. with small errors, until
${\delta=10^{-6}}$, inclusively. Besides, we may note that both the array UD and SR approaches work equally well. Indeed, the RMSE and MAPE of
these techniques are approximately the same. Moreover, their results  change slightly for the above
$\delta$'s (see the last two panels in Table~\ref{MC-estimators}).

In conclusion, the ``differentiated" KF performs markedly worse
compared to the proposed array UD scheme (and previously known array SR approach). Moreover, both the UD and SR based techniques work equally well on the considered test problems. This creates a
strong argument for their practical applications. However, further investigation and comparative study have to be
 performed to provide  a rigorous theoretical and numerical
 analysis of the existing methods for the filter sensitivity computations.
These are open questions for a future research. Here, we just mention that the derived UD based method requires utilization of the MWGS orthogonal transformation while the array SR based approach uses any orthogonal rotation. Thus, the latter technique is more flexible in practice, but might produce less accurate answers for some problems because of its flexibility.

\bibliographystyle{IEEEtran}
\bibliography{BibTex_Library/books,%
              BibTex_Library/list_MVKulikova,%
              BibTex_Library/list_Tsyganova,%
              BibTex_Library/list_identification,%
              BibTex_Library/filters,%
              BibTex_Library/list_ML,%
              BibTex_Library/list_linalg}

\end{document}